\newtheorem{theorem}{Theorem}
\newtheorem{lemma}[theorem]{Lemma}
\newtheorem{definition}[theorem]{Definition}
\newtheorem{remark}[theorem]{Remark}
\title[Conformal neutrino tomography]{Breaking the conformal freedom of spacetime with supernova neutrino imaging}
\author{Joonas Ilmavirta and Gunther Uhlmann}
\thanks{J. Ilmavirta, Department of Mathematics and Statistics, University of Jyv\"askyl\"a, P.O. Box 35 (MaD), FI-40014 University of Jyväskylä, Finland. \texttt{joonas.ilmavirta@jyu.fi}}
\thanks{G. Uhlmann, Department of Mathematics, University of Washington, Seattle, WA 98195-4350, USA, and Institute for Advanced Study of the Hong Kong University of Science and Technology. \texttt{gunther@math.washington.edu}}
\date{\today}
\newcommand{\R}{\mathbb R}
\newcommand{\der}{\mathrm d}
\newcommand{\ip}[2]{\left\langle#1,#2\right\rangle}
\begin{document}

\begin{abstract}
It is known that a geometric measurement of the light cones of supernovae determines the conformal class of the visible part of the spacetime.
The conformal factor is physically meaningful but cannot be determined geometrically by anything with zero mass, such as the photon.
We show that measuring the neutrino cones in addition to light cones completely removes this gauge freedom.
We describe the physical model in great detail, including why ultrarelativistic neutrinos are the only option.
\end{abstract}

\maketitle

\section{Introduction}

Can a color-blind astronomer reconstruct a reliable model of the universe by just looking at the sky?
More accurately:
If one makes geometric measurements of the arrivals of all photons from all supernova explosions but has no spectral information on the photons, can one reconstruct the Lorentzian metric describing the spacetime in one's visible past?

This is not quite possible:
Such data does determine the conformal class of the metric, but nothing more can be said from such data.
Lightlike geodesics are conformally invariant.

This conformal gauge freedom can be broken by measuring a particle that has a non-zero mass, and the only such option is a neutrino.
We will show that in a simple but physically reasonable model neutrino measurements fix the conformal factor uniquely.

Our data is purely geometric in nature.
We only assume that we know the mechanism behind the supernovae (which is valid for type Ia); all measurements of photons and neutrinos are based on the light cones and ``neutrino cones'' that describe when and where they are observed.
All measurements are passive, as otherwise it is impossible to reach cosmologically relevant scales in any remotely realistic scenario.

We model the trajectories of neutrinos as first order perturbations of those of photons.
We will see later that this linearization is well justified, as the main small parameter~$\mu^2$ describing neutrino mass is on the order of~$10^{-20}$.

We begin the article with the geometric background in section~\ref{sec:math-intro}, covering the results in inverse problems related to light observation sets and determination of the conformal class.
We then give a detailed exposition of our geometric model for supernova neutrinos in section~\ref{sec:model}.
The geometric and the physical introductions can be read independently of each other.
We state and prove our main theorem in section~\ref{sec:thm}.

\subsection{Geometric background}
\label{sec:math-intro}

Let $(M, g)$ be a $(1+3)$-dimensional time oriented Lorentzian manifold.
The signature of~$g$ is $(+, -, -, -)$.
(The typical choice of signature is $(-, +, +, +)$ in the inverse problems literature, but our choice is more natural for particle physics so we use it to simplify kinematic considerations below.)
The prototypical example is  Minkowski space-time $(\mathbb{R}^4, g_m)$ with the metric $g_m = \der t^2 - \der x^2 - \der y^2 - \der z^2$.
The light cone and the classification of tangent vectors is illustrated in figure~\ref{fig:cone}.

\begin{figure}[ht]
	\centering
	%{\includegraphics[scale=.3]{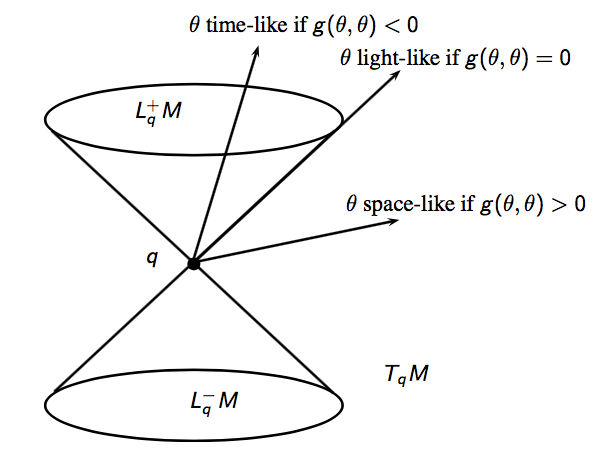}}
	\begin{overpic}[scale=.3]{Lorentzian_manifold.png}
    \put (-15,70) {\small\colorbox{white}{$v$ is timelike if $g(v,v)>0\qquad\qquad$}}
    \put (55,65.5) {\small\colorbox{white}{$v$ is lightlike if $g(v,v)=0$}}
    \put (55,42) {\small\colorbox{white}{$v$ is spacelike if $g(v,v)<0$}}
    \end{overpic}
    \caption{The light cone at $q\in M$ is a subset of~$T_qM$. In our sign conventions timelike directions have a positive square.}
    \label{fig:cone}
\end{figure}

\begin{definition}
a)	$L_q^\pm M$ is the set of future (past) pointing lightlike vectors at~$q$.

b)  Casual vectors are the collection of timelike and lightlike vectors.

c)	A curve~$\gamma$ is timelike (lightlike, causal) if the tangent vectors are timelike (lightlike, causal).
\end{definition}

Let~$\hat \mu$ be a timelike geodesic, which corresponds to the worldline of an observer in general relativity. 
For $p, q\in M$ the notation ${p\ll q}$ means that~$p$ and~$q$ can be joined by a future pointing timelike curve, and ${p<q}$ means that~$p$ and~$q$ can be joined by a future-pointing causal curve. 

\begin{figure}[ht]
	\centering
	{\includegraphics[scale=.3]{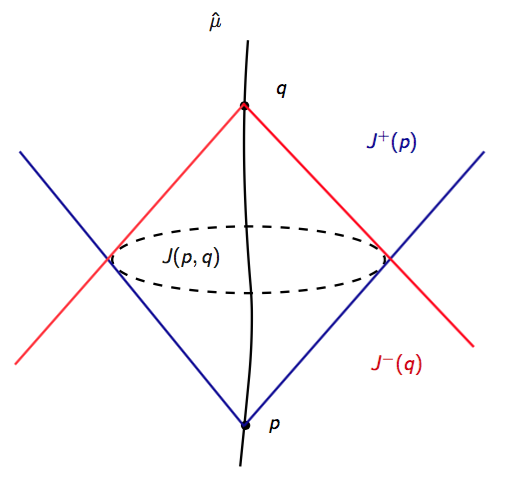}}
	\caption{The intersection of the causal future~$J^+(p)$ of~$p$ and the causal past~$J^-(q)$ of~$q$ is the diamond-shaped region $J(p,q)\subset M$.}
	\label{fig:diamond}
\end{figure}

\begin{definition}
a)	The chronological future of $p\in M$ is $I^+(p) = \{q\in M: p\ll q\}$.

b)	The {causal future} of $p\in M$ is $ J^+(p) = \{q\in M: q< p\}$.

c)   $J(p, q) = J^+(p)\cap J^-(q)$, $I(p, q) = I^+(p)\cap I^-(q)$ are the diamond-shaped regions as depicted in figure~\ref{fig:diamond}.
\end{definition}

\begin{definition}
A Lorentzian manifold $(M, g)$ is {globally hyperbolic} if
there is no closed causal paths in~$M$, and for any $p, q\in M$ and $p<q$, the set $J(p, q)$ is compact.
\end{definition}

%\begin{figure}[h]
%	\centering
%	{\includegraphics[scale=.5]{images/doublecone2.eps}}
%\end{figure}
Under the global hyperbolicity assumption,  Einstein's equations, for instance,  are well-posed on $(M, g)$.
Also, in this case,  $(M, g)$ is isometric to the product manifold 
%\begin{equation}
$
{\R\times N}
$
%\quad
%\text{
with the metric 
%}
%\quad
$
{g =\beta(t, y)\der t^2 - \kappa(t, y)}
$
.
%\end{equation}
Here $\beta: \R\times N\to \R_+$ is smooth, $N$ is a three-dimensional manifold and~$\kappa$ is a Riemannian metric on~$N$ and smooth in~$t$.  

We shall use $x = (t, y) = (x_0, x_1, x_2, x_3)$ as the local coordinates on~$M$. 
Let $\mu=\mu([-1,1])\subset M$ be a timelike geodesic containing~$p^-$ and~$p^+$.
We consider observations in a neighborhood $U\subset M$ of~$\mu$.

\begin{definition} 
Let $W\subset I^-(p^+)\setminus J^-(p^-)$ be relatively compact  open set.

The \emph{light observation set} for $q\in W$ is
\begin{equation}
{P_U(q)\coloneqq\{\gamma_{q,\xi}(r)\in U;\ r\geq 0,\ \xi\in L_q^{+}M\}.}
\end{equation}
See figure~\ref{fig:two}.
\end{definition}

\begin{figure}
\centering
\begin{subfigure}{.5\textwidth}
  \centering
  \begin{overpic}[height=4cm]{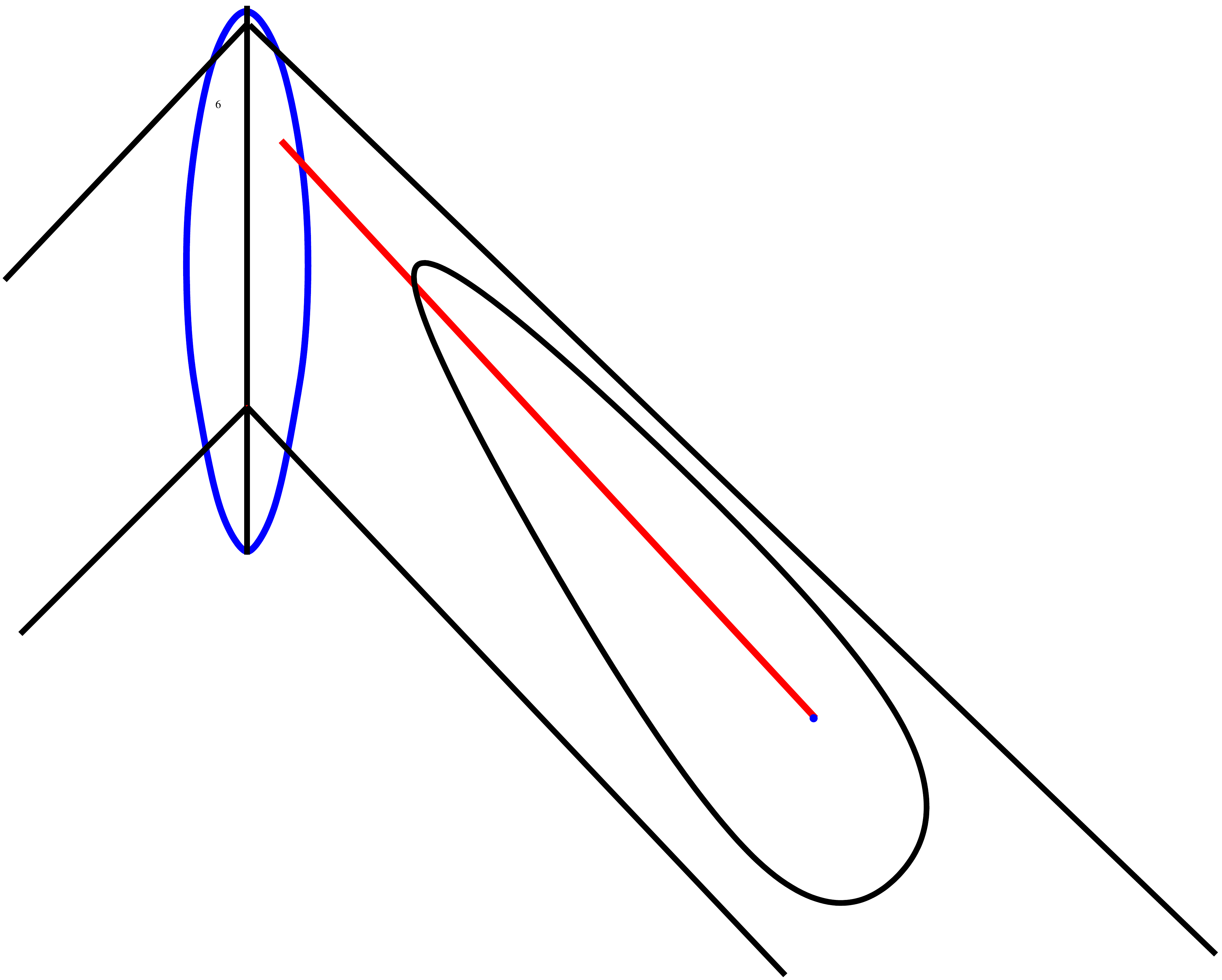}
    \put (22,82) {\small$p^+$}
    \put (22,30) {\small$p^-$}
    \put (10,50) {\small$U$}
    \put (62,12) {\small$W$}
    \end{overpic}
%  \caption{A subfigure}
%  \label{subfig:1}
\end{subfigure}%
\begin{subfigure}{.5\textwidth}
  \centering
  \begin{overpic}[height=4cm]{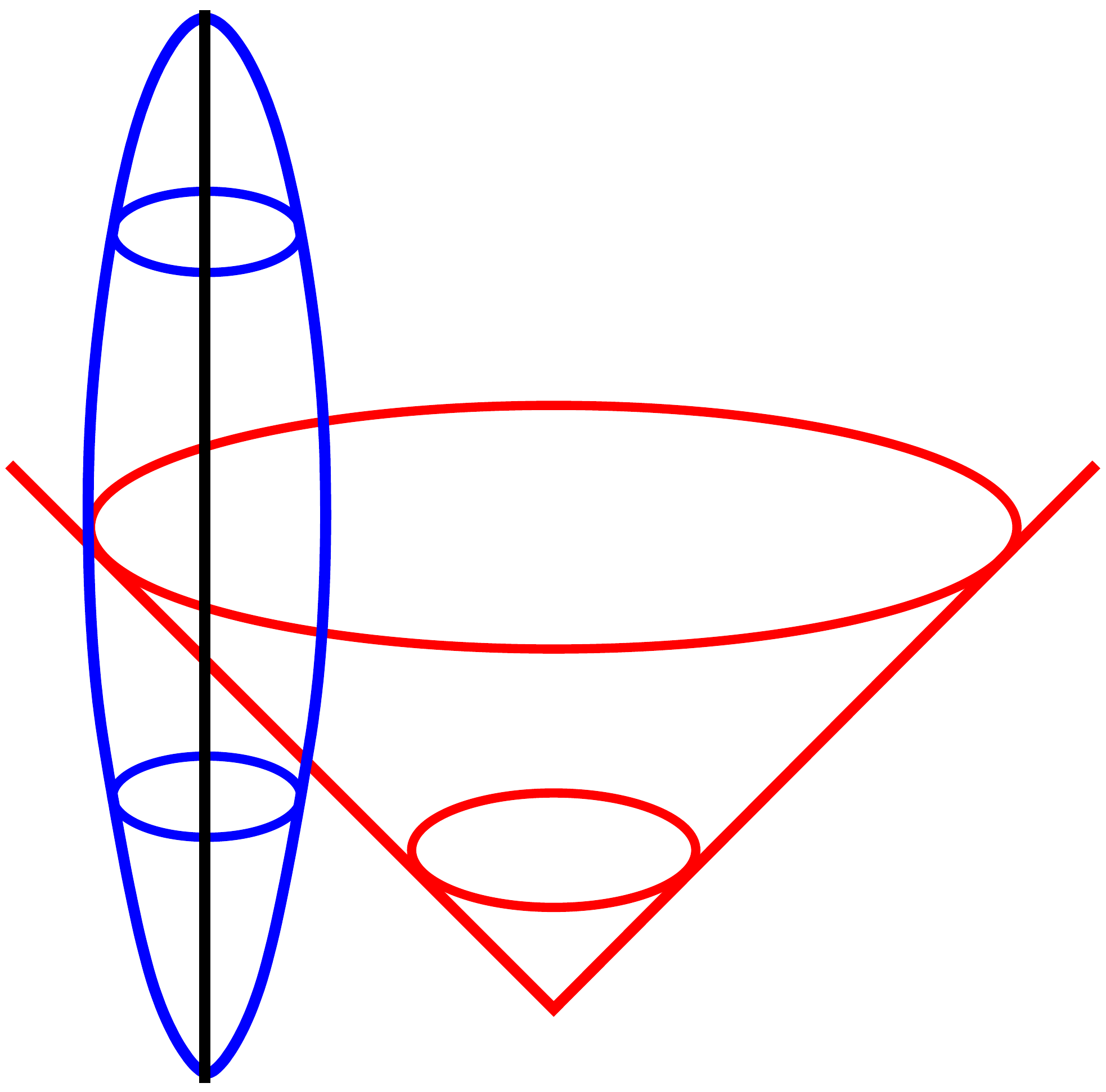}
  \put(55,5){$q$}
  \put(25,50){\vector(1,1){25}}
  \put(50,78){$P_{U}(q)$}
    \end{overpic}
%  \caption{A subfigure}
%  \label{subfig:2}
\end{subfigure}
\caption{
Left% (subfigure~\ref{subfig:1})
: The blue observation set~$U$ as a neighborhood of a timelike geodesic from~$p^-$ to~$p^+$, and a set~$W$ contained in the lightlike past of~$U$.
Right% (subfigure~\ref{subfig:2})
: The light observation set~$P_U(q)$ of the source point~$q$ as observed in the (blue) set~$U$.
}
\label{fig:two}
\end{figure}

\begin{definition}
The earliest light observation set of $q\in M$ in~$U$ is 
\begin{equation}
\begin{gathered}
\mathcal{E}_U(q) = \{x\in \mathcal{P}_U(q): \text{ there is no $y\in \mathcal{P}_U(q)$ and future pointing} \\
\text{ timelike path $\alpha$ such that $\alpha(0) = y$ and $\alpha(1) = x$}\} \subset U.
\end{gathered}
\end{equation}
\end{definition}

In the physics literature the light observation sets are called light-cone cuts \cite{EH1,EH2}.

The following result was proven in \cite{KLU2013} (the proof was published in \cite{KLU2018}).

\begin{theorem}[{\cite{KLU2013,KLU2018}}]
\label{thm:klu}
	Let $(M, g)$ be an open smooth globally hyperbolic Lorentzian manifold of dimension $n\geq 3$ and let $p^+, p^-\in M$ be the points of a timelike geodesic $\hat\mu([-1, 1])\subset M, p^\pm = \hat \mu(s_\pm)$. Let $U\subset M$ be a neighborhood of $\hat\mu([-1, 1])$ and $W\subset M$ be a relatively compact set. Assume that we know 
	\begin{equation}
	 \mathcal{E}_{U}(W).
	\end{equation}
	Then we can determine the topological structure, the differential structure, and the conformal structure of~$W$, up to diffeomorphism.
\end{theorem}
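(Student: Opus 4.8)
The plan is to treat the known family $\{\mathcal{E}_U(q)\}_{q\in W}$ as an abstract, index-free collection of subsets of~$U$, and to recover~$W$ together with each of its structures purely from how these subsets sit inside and vary within~$U$. The first step is an injectivity statement: the assignment $q\mapsto\mathcal{E}_U(q)$ is one-to-one on~$W$. Two distinct source points have distinct future null cones, and since~$U$ is a full neighborhood of the timelike geodesic $\hat\mu([-1,1])$, the earliest arrivals recorded along~$\hat\mu$ already separate them; global hyperbolicity guarantees that first arrivals are genuinely attained and that the relevant causal diamonds are compact, so the leading edge of $J^+(q)$ is well defined. Once injectivity holds, the point set of~$W$ is canonically identified with the data, and everything that follows is about upgrading this bijection to a homeomorphism, a diffeomorphism, and finally a conformal identification.

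Next I would topologize the collection of observation sets using a Hausdorff-type distance in~$U$ and show that $q\mapsto\mathcal{E}_U(q)$ is a homeomorphism onto its image. Continuity follows from the smooth dependence of null geodesics on their initial point $q$ and direction $\xi\in L_q^+M$; continuity of the inverse uses that two nearby earliest-arrival sets force their source points to be causally close, again via global hyperbolicity and compactness of the diamonds. This transports the topological manifold structure of the data to~$W$.

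For the differentiable structure I would exploit that, away from a caustic set, the function recording the time of first arrival of light from~$q$ at a point of~$\hat\mu$ (or at a finite family of fixed observer points in~$U$) depends smoothly on~$q$. At a generic~$q$ the fastest future null geodesic reaching a given observer is unique and arrives transversally, so a suitable finite collection of such first-arrival-time functions has linearly independent differentials and furnishes a smooth coordinate chart. Transporting the resulting atlas through the bijection of the previous step endows~$W$ with a differentiable structure, determined up to diffeomorphism.

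Finally, the conformal structure. The classical fact I would invoke is that in dimension $n\geq 3$ the family of unparametrized null geodesics determines the Lorentzian metric up to a positive conformal factor, and the observation sets encode exactly this null-geodesic data: the cone $L_q^+M\subset T_qM$ can be read off from the infinitesimal behaviour of~$\mathcal{E}_U(q)$ under variation of~$q$, recovering the null-cone field and hence the conformal class on~$W$. The hardest part of the whole argument, I expect, is the smooth-structure step, because the earliest observation sets develop singularities at cut points and conjugate points; one must locate this caustic set, argue that it is suitably small, and verify that the first-arrival coordinates are genuinely smooth and regular off it before the tangent-cone extraction underlying the conformal reconstruction can be carried out cleanly.
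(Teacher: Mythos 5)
First, a point of order: the paper does not prove Theorem~\ref{thm:klu} at all. It is imported verbatim from the cited works of Kurylev, Lassas and Uhlmann and used as a black box, so there is no internal proof to compare yours against. Your outline does follow the broad architecture of that original proof (injectivity of $q\mapsto\mathcal{E}_U(q)$, topology via convergence of observation sets, smooth structure via earliest-arrival-time coordinates, conformal class via reconstruction of the null cones), but as it stands it has genuine gaps at precisely the two hard steps, and one of your proposed mechanisms would not close.

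The injectivity step is asserted rather than proved: ``two distinct source points have distinct future null cones'' does not imply that their \emph{earliest} observation sets in~$U$ differ. A single observer's first-arrival time has codimension-one level sets, so arrival times along $\hat\mu$ alone certainly do not separate points, and a priori two points could have identical earliest observation sets over all of~$U$ while their light cones differ only in the non-earliest parts. The actual argument hinges on a shortcut lemma --- a causal curve that is not a null pregeodesic, or that continues past a cut point, can be deformed to a timelike curve arriving strictly earlier --- from which coincidence of $\mathcal{E}_U(q)$ and $\mathcal{E}_U(q')$ forces $q$ and $q'$ onto a common null geodesic with one in the causal past of the other, and a contradiction then comes from a second observer. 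For the smooth structure, your plan to work ``at a generic~$q$'' and excise a caustic set cannot be completed as stated: you need coordinates at \emph{every} $q\in W$, and the correct mechanism is again the shortcut lemma, which guarantees that the earliest observation is always attained along a null geodesic free of conjugate points strictly before the observation point; one then \emph{chooses} observers depending on~$q$ so that the arrival-time differentials are linearly independent null covectors, and there is no exceptional set to remove. Finally, the cone $L_q^+M$ is not read off from variations of $\mathcal{E}_U(q)$ inside~$U$ but from the causal relations among points of~$W$ that the family of observation sets encodes (which pairs $q,q'$ are null-related), whose tangent cones give the null cone at~$q$ and hence the conformal class. None of this is fatal to your strategy --- it is the strategy of the cited proof --- but each of these steps is a substantial argument there, not a one-line observation.
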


Our result, theorem~\ref{thm:main}, builds on theorem~\ref{thm:klu}, using additional data to determine the only remaining unknown: the conformal factor.

The light observation sets corresponds to measurements of light point sources. This concept has also been applied to {\sl active measurements}. In this cases different types of waves are sent from a neighborhood~$U$  of a timelike geodesic. One creates an artificial source on the diamond set $I(p^-, p^+)$ that comes back to the set~$U$ and thus measuring the light observation set. This was done in \cite{KLU2013} and \cite{KLUO} for Einstein equations coupled with scalar fields, for semilinear equations in \cite{LUW1} for equations with quadratic non-linearities in the derivatives \cite{WZ}. The measurements made in those papers are encoded in the source to solution map. For the case of measurements on timelike boundary of Lorentzian manifolds with boundary in \cite{HU} was introduced the concept of {\sl boundary light observation set}. It was used to solve inverse boundary value problems by determining the Lorentzian manifold up to a conformal class in \cite{UZ} and \cite{HUZ2} by measuring the Dirichlet-to-Neumann or Neumann to Dirichlet map. For  comprehensive reviews of a very active field see \cite{L} and \cite{UZ-review}.

Linearizations of geometric and analytic inverse problems often lead to problems in integral geometry.
In the case of Lorentzian geometry and hyperbolic equations, the linear inverse problems typically concern the light ray transform.
In light light ray tomography of tensor fields, there is a conformal kernel in addition to the usual potential kernel of Riemannian ray transforms~\cite{FIO}.
For more on inverse problems in integral, we refer the reader to the review~\cite{IM}.

\section{The physical model}
\label{sec:model}

The physical basics needed to set up the model can be found in a number of books and other sources.
The most comprehensive material for neutrinos and supernovae are ~\cite{Giunti-book,SN-book}.
A reader in need of more information on neutrino physics and supernovae should be amply satisfied with this pair of books.
We will not include citations in the physical discussion of this section, and we have tried to make this section readable to a mathematician.

\subsection{Breaking symmetries with mass}

Massless particles experience no time and are blind to distance scales in the spacetime; they only care about the conformal class of the geometry.
Mass brings scale to the universe and breaks the conformal freedom that a universe consisting of only massless particles would enjoy.
Therefore massive particles are needed fix the scale everywhere, and this amounts to fixing a metric from a conformal class.

Travelling at the speed of light is a has a clear invariant description in Lorentzian geometry --- the curve is null --- and this description enjoys conformal invariance as well.
Travelling at almost the speed of light is a far uglier concept: coordinate invariance and conformal invariance are both lost.

This loss of invariance both a curse and a blessing.
Any particle travelling at less than the speed of light has a valid rest frame and in that frame it is nowhere near the speed of light.
The geometric description is far more involved and requires the use of certain natural coordinate systems.
This is the price to pay for the benefit:
breaking the conformal symmetry is precisely what allows us to identify the correct metric from a conformal class.

Particles travelling at speeds comparable to the speed of light are called relativistic, and those that are extremely close to that speed limit are called \emph{ultrarelativistic}.
Both of these concepts depend on the observer --- the only speed on which everyone will agree is exactly the speed of light.

Gravitational waves travel at the speed of light.
Geometric measurements of the kind we are using make no difference between a photon and a graviton, and therefore measuring ``graviton cones'' would not allow breaking the conformal gauge symmetry.

\subsection{Supernova neutrinos}

In order to measure the conformal factor geometrically at cosmological distances, one needs to measure particles that
\begin{enumerate}
\item
are slow enough and fast enough
(not exactly the speed of light but not a tiny fraction either),
\item
do not interact too much with electromagnetic fields near us
(so that they are accurately modelled by geodesics),
\item
can be detected,
and
\item
are produced in sufficient quantities all around the universe.
\end{enumerate}
The only particle that meets these requirements is the neutrino:
Neutrinos have tiny but non-zero mass and are typically ultrarelativistic.
(The slower ones are practically impossible to observe with current technology.)
They are electrically neutral and interact very weakly but can be observed with specialized detectors.
They are produced in great numbers in supernova explosions; about 99~\% of supernova energy is carried by neutrinos.

Supernova neutrinos are the only viable signal for geometric determination of the spacetime, given its conformal class.
(There are of course a number of different kinds of physical measurements that could achieve similar results. Most measurements rely on spectral information in one way or another. We require no information on spectrum or intensity.)
The resolution of current neutrino detection technology presents an issue, but the challenges are more technological than fundamental, unlike with any other particle.

We assume that supernova explosions are dense in the spacetime.
For a rough estimate on their true density, we may proceed as follows:
There are on the order of $0.01$ supernovae per year in our Milky Way and galaxies are typically about a million light years apart, which amounts to a density of
\begin{equation}
10^{-20}
\cdot
\frac{\text{supernovae}}{(\text{year})^4}
=
1
\cdot
\frac{\text{supernova}}{(10^5\text{ years})^4}
,
\end{equation}
where `year' stands for either year or light year ($c=1$).
On a scale of millions or billions of years and light years our model is not unreasonable.

A supernova explodes twice in our model:
The neutrinos are released first and photons a short time $\tau>0$ later.
In reality both processes are spread over time, but the time difference between the neutrino burst and the photon burst is substantial.
The difference of a couple of hours (depends on details of the model and the supernova) is due to photons being trapped in the expanding plasma in the early stages of the explosion, whereas neutrinos do not interact with the rest of the matter and are free to leave immediately.

We will treat the photon explosion as the main event and the neutrino explosion preceding it a small perturbation.
This choice makes the setting most compatible with using first pure photon observations to determine the conformal class of the spacetime.

There are several different types of supernovae, and we shall not venture into their taxonomy.
We will only mention that type Ia supernovae have very consistent characteristics due to the mechanism that produces them.
If we restrict our attention to this type only, it is reasonable to assume a constant and known time delay between the neutrino and the photon burst.

The simplifying assumptions of our model are the following:
\begin{itemize}
\item
The delay between neutrinos and photons is a known constant.
\item
All photons are released instantaneously, and so are all neutrinos.
\item
All neutrinos have the same known energy, and that energy is very high compared to the neutrino mass.
(The energies of the photons are geometrically irrelevant.)
\item
Supernovae are dense in the spacetime.
\end{itemize}

These assumptions make for a tractable geometric model, where the worldlines of neutrinos can be seen as small perturbations of worldlines of photons.
A photon travels along a null geodesic, so what we use to describe neutrino kinematics can well be called ultrarelativistic Jacobi fields.
We will explore the kinematics of supernova neutrinos within this model in the next subsection.

\subsection{Kinematics of a supernova explosion}

The kinematics will depend on two model parameters: time delay $\tau>0$ between neutrinos and photons and the mass-to-energy ratio $\mu=m/E$ of a neutrino in the supernova rest frame.
Both~$\tau$ and~$\mu$ are considered small: the delay is cosmologically negligible (a couple of hours certainly is) and the neutrinos are ultrarelativistic ($m<1\text{ eV}$ and $E\approx10\text{ MeV}$, so $\mu<10^{-10}$).
The speed of light need not be made explicit so we set $c=1$.

A supernova is massive enough to substantially curve the spacetime, so the spacetime near it is best described with the Schwarzschild or the Kerr metric.
Both of these metrics are asymptotically the Minkowski metric of the flat spacetime of special relativity.
However, on the largest scales the geometry is whatever the geometry of the whole universe is, and on that we impose no restrictions.

We do not focus on modelling the spacetime geometry and other physical aspects of the explosion itself, so the explosion is most conveniently described in the mesoscopic model of a Minkowski space between the very local (and highly curved and dynamic) and the global (and unrestricted) geometries.
Every tangent space of a Lorentzian manifold $(M,g)$ describing the spacetime is indeed a Minkowski space.

Consider a supernova explosion at $x\in M$.
Suppose the four-velocity --- the normalized velocity vector of the future-pointing worldline --- of the exploding star is $u\in T_sM$.
The star has non-zero mass, so~$u$ is always timelike and can be normalized.
We may always choose coordinates so that $u=(1,0,\dots,0)$, so that locally the coordinate time agrees with the proper time of the star.

For simplicity, let us first consider the case of $1+1$ dimensions in the Minkowski space~$T_xM$.
The worldline of a photon starting at the origin can be parametrized by $t\mapsto(t,t)$.
This parametrization is natural, as the parameter is the proper time of the exploding star --- or, equivalently, the coordinate time in its rest frame.

We want to find the worldline of a neutrino with energy~$E$ and mass~$m$ starting at $x=0$ when $t=-\tau$.
The Lorentz factor is $\gamma=E/m=\mu^{-1}$, so the neutrino's four-velocity is
\begin{equation}
u_\nu
=
\mu^{-1}
(
1
,
\sqrt{1-\mu^2}
)
\approx
\mu^{-1}
(
1
,
1-\tfrac12\mu^2
).
\end{equation}
The worldline can be parametrized with the proper time of the exploding star (rather than that of the neutrino itself):
\begin{equation}
t
\mapsto
\mu u_\nu t-(\tau,0)
=
(
t-\tau
,
(1-\tfrac12\mu^2)t
)
.
\end{equation}
Thus the spacetime separation between the photon and the corresponding neutrino is
\begin{equation}
\label{eq:Minkowski-kinematics}
(
-\tau
,
-\tfrac12\mu^2t
)
.
\end{equation}
The higher-dimensional Minkowskian description is similar.
Notice that~$\tau$ and~$\mu$ are the two small parameters of our model.

The photons are faster than neutrinos but are released later.
The time it takes fro the photons to reach the neutrinos is, roughly and in Minkowskian geometry, about~$2\tau\mu^{-2}$.
With $\mu<10^{-10}$ and~$\tau$ about an hour, this amounts to about $10^{16}$~years, which is about a million times the age of the universe.
Therefore photons do not catch up with neutrinos of such a high energy.

Globally the separation between the photon and the neutrino is described by a Jacobi field~$J(t)$.
If the photon is at~$\gamma(t)$, then the neutrino is at ``$\gamma(t)+J(t)$''.
(The sum only makes sense to leading order.)
Starting from the Minkowskian description~\eqref{eq:Minkowski-kinematics} of the kinematics and accounting for a general four-velocity~$u$ of the exploding star, we find that the initial conditions of the Jacobi field are:
\begin{equation}
\label{eq:nu-JF-initial-conditions}
\begin{cases}
J(t) = -\tau u
\\
D_tJ(t) = \tfrac12\mu^2\ip{u}{\dot\gamma(t)}(u-\dot\gamma(t)).
\end{cases}
\end{equation}
This \emph{neutrino Jacobi field} describes the kinematics of one neutrino in relation to one photon:
it is the infinitesimal deviation of the trajectory of a neutrino from the corresponding photon with the same initial direction.

In our measurements we do not know which photon corresponds to which neutrino.
Therefore we must consider the ``neutrino cone'' as a variation of the light cone.
Let us denote the future light cone of the supernova by
\begin{equation}
L_x^+
=
\{
v\in T_xM;
\ip{v}{v}=0,
\ip{v}{u}>0
\}.
\end{equation}
The \emph{neutrino variation field}~$V$ corresponding to $u\in T_xM$ and the fixed parameters~$\tau$ and~$\mu$ is the section of the pullback bundle $\exp_x|_{L_x}^{*}TM$ given by
\begin{equation}
V^x(v)
=
J(1),
\end{equation}
where~$J$ is the neutrino Jacobi field along the geodesic $t\mapsto\exp_x(tv)$.

Being a section of the pullback bundle means that the function $V\colon L_x\to TM$ satisfies $V(v)\in T_{\exp_x(v)}$.
If~$\exp_x$ is a local diffeomorphism near $v\in T_xM$, then the neutrino variation field~$V$ can be locally seen as the restriction of a vector field to the light cone~$\exp_x(L_x^+)$.
The heuristic interpretation is that a photon at~$\exp_x(v)$ (with $v\in L_x^+M$) corresponds to a neutrino at ``$\exp_x(v)+V^x(v)$''.
(Again, the sum is only sensible to leading order.)

The neutrino variation field describes the infinitesimal difference between the light cone and the neutrino cone.
The physically measurable data consists of a component of the neutrino variation field normal to the light cone~$\exp_x(L_x^+)$, and the data of two models are equivalent if these components agree.
The normal vectors and covectors of a light cone are lightlike, so there is no canonical choice of normalization, but the equality of the normal component is independent of the choice of a normal covector field.
We are now ready to define our neutrino data geometrically  and what it means for two models to have equivalent data.

\subsection{The data}

As described in the previous subsection, a covector field normal to a light cone is needed to restrict the measurement to the right component.
The light cone is $\exp_x(L_x)\subset M$ may fail to be a smooth submanifold where there are conjugate points along null geodesics, so such a covector field is best placed atop $L_s\subset T_xM$ instead of $\exp_x(L_x)\subset M$.

We say that $\nu\colon L_xM\to T^*M$ is a proper conormal field of the lightcone~$L_x$ if $\nu(v)=\rho(v)(\der\exp_x(v)v)^\flat$ for some non-vanishing smooth function $\rho\colon L_xM\to\R$.
It follows quickly from conformal invariance of null geodesics that this concept of a proper conormal field only depends on the conformal class of the Lorentzian metric.
Equivalence of neutrino data as defined below will be independent of the choice of this conormal field.
(It follows from a Lorentzian version of the Gauss lemma that~$\dot\gamma$ is normal to the light cone of any point on the null geodesic~$\gamma$. The proof is identical to the Riemannian version.)

\begin{definition}
\label{def:nu-data}
Fix any $\tau,\mu\in\R$.
Let $(M,g)$ be a smooth Lorentzian manifold, $U\subset M$ an open set, $x\in M$ a point, and $u\in T_sM$ a unit lightlike vector.
Denote $L_x^U=\exp_x^{-1}(U)\cap L_x^+$.
Let $\nu\colon L_x\to T^*M$ be a proper conormal field of the light cone~$L_xM$ and $V^x\colon L_x\to TM$ the neutrino variation field corresponding to $(u,\tau,\mu)$.

The neutrino data is the scalar map
\begin{equation}
\begin{split}
D(M,g,U,x,u,\nu,\tau,\mu)
\colon
&
L_x^U
\to
\R
,
\\
&
v
\mapsto
\nu(v)(V^x(v)).
\end{split}
\end{equation}
\end{definition}

\begin{definition}
\label{def:equivalent-data}
Fix any $\tau,\mu\in\R$.
Let $(M,g)$ be a smooth Lorentzian manifold, $U\subset M$ an open set, $x\in M$ a point, and $u\in T_sM$ a unit lightlike vector.
Denote $L_x^U=\exp_x^{-1}(U)\cap L_x^+$.
Let $\nu\colon L_x\to T^*M$ be a proper conormal field of the light cone~$L_xM$ and $V^x\colon L_x\to TM$ the neutrino variation field corresponding to $(u,\tau,\mu)$.

Let~$\hat g$ be a metric conformal to~$g$ and $\hat u\in T_xM$ a unit timelike vector so that $\ip{u}{\hat u}_g>0$ and so~$L_x^+$ is the same for both metrics.
Let $\beta\colon L_x\to L_x$ be a radial\footnote{A function $T_xM\to T_xM$ is radial when every vector is multiplied by a scalar depending on that vector.} function so that $\exp_x(v)=\widehat{\exp}_x(\beta(v))$ for all $v\in L_x^+$.
(This function exists and is unique by conformal invariance of null geodesics.)

We say that the $(M,g)$ and $(M,\hat g)$ have equivalent neutrino data if
\begin{equation}
D(M,g,U,x,u,\nu,\tau,\mu)
=
D(M,\hat g,U,x,\hat u,\nu\circ\beta,\tau,\mu).
\end{equation}
\end{definition}

\begin{remark}
Although the equivalence of data was formulated on the light cone $L_x\subset T_xM$ at a point~$x$ possibly far from the measurement set~$U$, the equivalence only concerns quantities measurable in~$U$.
If $\mathcal{L}=\exp_x(L_x)\cap U$ is a smooth submanifold, then equivalence may be formulated in terms of various fields on~$\mathcal{L}$.
A proper conormal field would simply be a non-vanishing covector field $\mathcal{L}\to T^*M$ so that $\ker(\nu(x))=T_x\mathcal{L}$.
The neutrino data can also be seen as a map $\mathcal{L}\to\R$, and equivalence of data means simply that the two functions are identical.
We chose to wrote the definitions above in terms of the tangent space~$T_xM$ rather than the measurement set~$U$ so as to allow for conjugate points and other non-smoothness.

In practice we assume that the conformal class has been determined by photon measurements, and so knowledge of the manifold~$M$, identification of the source point $x\in M$, and knowledge of the light cone~$L_x^+$ can be fairly assumed.
\end{remark}

\section{The uniqueness theorem}
\label{sec:thm}

We will show that if two metrics~$g$ and~$\hat g$ are conformal in a suitable set, then then equivalent neutrino data implies that they are equal.
The various sets should be thought of as follows:
\begin{itemize}
\item $M$ is the whole spacetime.
\item $U\subset M$ is where we measure.
\item $\Omega\subset M$ is where conformal equivalence of the two metrics is known.
\item $\omega\subset M$ is the union of all light rays along which we do neutrino measurements.
\end{itemize}

\begin{theorem}
\label{thm:main}
Fix any model parameters $\tau,\mu\in\R$.
Let~$M$ be a smooth manifold without boundary and $U\subset\Omega\subset M$ be open subsets.
Suppose~$\bar U$ is compact and contained in~$\Omega$.
Let~$g$ and~$\hat g$ be two Lorentzian metrics on~$M$.

The neutrino data determines the conformal factor uniquely in the following sense.
We make two assumptions:
\begin{enumerate}
\item
The two metrics are conformal:
\begin{itemize}
\item
$g=\hat g$ in~$U$
and
\item
$g=c\hat g$ in~$\Omega$ for some smooth function $c\colon\Omega\to(0,\infty)$.
\end{itemize}
Take any collection~$\Gamma$ of (not necessarily maximal) future-oriented lightlike geodesics in~$\Omega$ so that for all $\gamma\in\Gamma$ and~$t$ in the interval where~$\gamma$ is defined there is $t'>t$ so that $\gamma(t')\in U$.
Let $\omega\subset M$ be the union of all the rays in~$\Gamma$ and $\Omega\subset M$ any open subset containing~$\omega$.

\item
There are four-velocity fields so that the neutrino data is the same for both models:
Let~$u$ and~$\hat u$ be unit (w.r.t.~$g$ and~$\hat g$, respectively) timelike vector fields on $\omega\setminus U$.
These vector fields need not be even continuous.
Suppose that for all $x\in\omega\setminus U$ the neutrino data for
$(M,g,U,x,u,\nu,\tau,\mu)$
is equivalent with that of
$(M,\hat g,U,x,\hat u,\nu,\tau,\mu)$
in the sense of definition~\ref{def:equivalent-data}.
\end{enumerate}

Then two conclusions hold:
\begin{enumerate}
\item
The two metrics agree in~$\omega$:
The conformal factor satisfies $c|_{\omega}\equiv1$ and thus $g=\hat g$ in~$\omega$.

\item
The difference of four-velocities is normal to~$\Gamma$:
Take any $x\in\omega\setminus U$.
Whenever $\gamma(t)=x$ for some $\gamma\in\Gamma$, then $\ip{u(x)-\hat u(x)}{\dot\gamma(t)}=0$.
If the set of directions provided by~$\Gamma$ at~$x$ is an open subset of the light cone in~$T_xM$, then $u(x)=\hat u(x)$.
\end{enumerate}
\end{theorem}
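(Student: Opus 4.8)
The plan is to collapse the vector-valued neutrino data into a single scalar function along each light ray, recognise that function as \emph{affine} in the affine parameter, and then compare its two coefficients between the two metrics. First I would use that a proper conormal field is by definition proportional to $(\der\exp_x(v)v)^\flat=\dot\gamma(1)^\flat$, so the datum $\nu(v)(V^x(v))$ is a scalar multiple of $\ip{\dot\gamma(1)}{J(1)}$, where $J$ is the neutrino Jacobi field with initial conditions~\eqref{eq:nu-JF-initial-conditions}. The key computation is that $s\mapsto\ip{\dot\gamma(s)}{J(s)}$ is affine: differentiating twice, using $D_s\dot\gamma=0$ and the antisymmetry of the curvature tensor in its last two slots, yields $\tfrac{\der^2}{\der s^2}\ip{\dot\gamma}{J}=-\ip{\dot\gamma}{R(J,\dot\gamma)\dot\gamma}=0$. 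Reading off value and slope at the source from~\eqref{eq:nu-JF-initial-conditions} and using $\ip{v}{v}=0$ for the lightlike velocity $v=\dot\gamma(0)$, I get $\ip{\dot\gamma(s)}{J(s)}=-\tau\ip{u}{v}+\tfrac12\mu^2\ip{u}{v}^2\,s$. Thus, with the conormal normalised to the base velocity $\dot\gamma^\flat$, the measured data at the observation point $\gamma(s)\in U$ is exactly this affine function $h_g(s)$, and similarly $h_{\hat g}(\sigma)=-\tau\ip{\hat u}{\hat v}_{\hat g}+\tfrac12\mu^2\ip{\hat u}{\hat v}_{\hat g}^2\,\sigma$ for the other metric, where $\hat v\parallel v$ is the $\hat g$-initial velocity.

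Next I would compare the two models along a common ray. Since null geodesics are conformally invariant, the ray carries a $g$-affine parameter $s$ and a $\hat g$-affine parameter $\sigma$, related by the standard conformal rescaling $\der\sigma/\der s=K/c$ for a constant $K>0$, with $c\equiv1$ on~$U$. Because the equivalence of data is independent of the conormal, at each common observation point in~$U$ (where $g=\hat g$) I may evaluate both models against the same conormal covector; the equivalence then reads $h_g(s)=\sigma'(s)\,h_{\hat g}(\sigma(s))$. As $U$ is open this holds on an $s$-interval, so I can match the two sides as polynomials in $s$, using $\sigma'=K$ and $\sigma(s)=Ks+d$ on~$U$. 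The coefficient of $s^1$ gives $\ip{u}{v}_g^2=K^2\ip{\hat u}{\hat v}_{\hat g}^2$, and the coefficient of $s^0$, after substituting this relation, forces $d=0$. The offset equals $d=K\int_0^{s_{\mathrm{in}}}(c^{-1}-1)\,\der s$, the integrated excess of $c^{-1}$ over the segment of the ray from the source to its entry point $s_{\mathrm{in}}$ into~$U$.

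Now I would localise. The hypotheses grant data equivalence for \emph{every} source $x\in\omega\setminus U$, and every point of the ray before it reaches~$U$ is such a source; applying $d=0$ with the source sliding along the ray and differentiating in the source parameter turns the integral condition into the pointwise identity $c\equiv1$ along the ray. Since $\omega$ is the union of all these rays, this gives $c|_\omega\equiv1$ and hence $g=\hat g$ on~$\omega$, which is conclusion~(1). With $c\equiv1$ I may take the common affine parameter, so $K=1$ and $\hat v=v$, and the surviving $s^1$-relation collapses to $\ip{u}{\dot\gamma(t)}=\ip{\hat u}{\dot\gamma(t)}$, that is $\ip{u-\hat u}{\dot\gamma(t)}=0$; if the ray directions at~$x$ fill an open subset of the light cone, varying the direction forces $u(x)=\hat u(x)$, giving conclusion~(2).

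The hard part will be the second paragraph: honest bookkeeping of the conformal reparametrization~$\beta$ and of the conormal normalisation, so that the measured \emph{slope} is correctly tied to the $\hat g$-affine length (hence to $\int c^{-1}$) rather than to the $g$-affine length, and so that the conormal factors cancel when the two models are evaluated at the common point. Once this separates the two coefficient-matchings into one equation governing~$c$ and one governing $u-\hat u$, the affineness of $\ip{\dot\gamma}{J}$ together with the freedom to slide the source along each ray completes the argument. I would also flag that the whole separation needs $\mu\neq0$: for $\mu=0$ the slope vanishes, the condition $d=0$ is lost, and conclusion~(1) cannot follow — consistent with the massless case enjoying full conformal freedom.
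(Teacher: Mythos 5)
Your proposal is correct and follows essentially the same route as the paper: the paper's auxiliary function $N(t,s)=\ip{\dot\gamma(t)}{J_s(t)}$ is exactly your affine scalar, and the paper likewise matches the slope (forcing $\lambda=\hat\lambda$) and the intercept (forcing the reparametrization to be trivial, hence $c\equiv1$) before invoking the same open-cone lemma for conclusion~(2). The only differences are bookkeeping --- the paper anchors both affine parameters in~$U$ so that your constants $K$ and $d$ are packaged into a single diffeomorphism $\alpha$ with $\alpha'=c$ and no Jacobian factor survives --- and your observation that $\mu\neq0$ is indispensable is likewise implicit in the paper's division by $\mu^2$.
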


We pose very little restrictions on the collection~$\Gamma$ of light rays.
It can well consist of only a single ray, in which case our result proves uniqueness along it.
If all light rays that meet~$U$ are included, then~$\Gamma$ provides an open subset of directions at all points and thus the stronger statement at the end of the theorem holds true.

\begin{proof}
Let~$\gamma$ be a $g$-geodesic and~$\hat\gamma$ a $\hat g$-geodesic, both lightlike.
We shift parameters so that $\gamma(0)=\hat\gamma(0)\in U$.
Much of the notation becomes lighter by using the abbreviation $\lambda(s)=\ip{u(s)}{\dot\gamma(s)}_g$.

Consider first just the geodesic~$\gamma(t)$ parametrized by an interval $I\subset\R$ and a supernova at $t=s$ along it.
The neutrino Jacobi field~$J_s$ along~$\gamma$ corresponding to this supernova has the initial conditions~\eqref{eq:nu-JF-initial-conditions}:
\begin{equation}
\begin{cases}
J_s(s) = -\tau u(s)
\\
D_tJ_s(s) = \tfrac12\mu^2\lambda(s)(u(s)-\dot\gamma(s)).
\end{cases}
\end{equation}

We define the auxiliary function $N\colon I^2\to\R$ by
\begin{equation}
N(t,s)
=
\ip{\dot\gamma(t)}{J_s(t)}_g.
\end{equation}
The Jacobi equation and antisymmetry of the Riemann curvature tensor give
\begin{equation}
\partial_t^2N(t,s)
=
0
\end{equation}
for all $t,s\in I$.
Therefore the initial conditions lead to the explicit expression
\begin{equation}
\label{eq:N-explicit}
N(t,s)
=
-\tau\lambda(s)
+
\tfrac12\mu^2\lambda(s)^2(t-s).
\end{equation}
This formula provides a crucial link between the data and the desired quantities.

Let us then turn to the geodesic $\hat\gamma(\hat t)$.
The parametrizations of the two geodesics differ by a diffeomorphism $\alpha\colon\hat I\to I$ so that $\hat\gamma(\hat t)=\gamma(\alpha(\hat t))$ and
\begin{equation}
\alpha'(\hat t)
=
c(\hat\gamma(t))
>
0
.
\end{equation}
(This change of parameters corresponds to the radial map~$\beta$ that matches the two parameters of any point on the light cone as seen in~$U$.)

We decorate all objects related to the metric~$\hat g$ by hats, including $\hat N(\hat t,\hat s)$ and $\hat\lambda(\hat s)$.
The inner products are all with respect to~$\hat g$ in these objects.
Equation~\eqref{eq:N-explicit} holds also when decorated with hats.

As the two metrics agree on~$U$,  we have $\alpha(\hat t)=\hat t$ when $t\approx0$.\footnote{Here and henceforth $t\approx0$ means that the stated identity holds for all~$t$ in a neighborhood of~$0$.}
As pointed out above, equivalence of data is independent of the choice of a proper conormal field.
We may thus set $\rho\equiv1$, so that $\nu=\dot\gamma^\flat$ when we only use equivalence along the line~$\gamma$.
This leads to the key identity
\begin{equation}
N(\hat t,\alpha(\hat s))
=
\hat N(\hat t,\hat s)
\end{equation}
whenever $\hat t\approx0$ and $\hat s\in\hat I$.
Equation~\eqref{eq:N-explicit} transforms this into
\begin{equation}
\label{eq:N=hatN}
-\tau\lambda(\alpha(\hat s))
+
\tfrac12\mu^2\lambda(\alpha(\hat s))^2(\hat t-\alpha(\hat s))
=
-\tau\hat\lambda(\hat s)
+
\tfrac12\mu^2\hat\lambda(\hat s)^2(\hat t-\hat s).
\end{equation}
Differentiating equation~\eqref{eq:N=hatN} with respect to~$\hat t$ --- which we may do as we have an open set of observation times $\hat t\approx0$ --- gives
\begin{equation}
\tfrac12\mu^2\lambda(\alpha(\hat s))^2
=
\tfrac12\mu^2\hat\lambda(\hat s)^2
\end{equation}
and thus
\begin{equation}
\label{eq:lamda=hatlambda}
\lambda(\alpha(\hat s))
=
\hat\lambda(\hat s).
\end{equation}
This simplifies equation~\eqref{eq:N=hatN} to
$
\hat t-\alpha(\hat s)
=
\hat t-\hat s
$,
whence~$\alpha$ is the identity function and thus $c=\alpha'=1$ along this geodesic.
Therefore $c\equiv1$ along all lines in~$\Gamma$ and thus on all of~$\omega$, concluding the proof of the first claim.

Now that the conformal factor is identically one, equation~\eqref{eq:lamda=hatlambda} yields
$\ip{u(\gamma(s))}{\dot\gamma(s)}=\ip{\hat u(\gamma(s))}{\dot\gamma(s)}$.
This holds for all curves in~$\Gamma$ through the same point $x\coloneqq\gamma(s)$, whence $u(x)-\hat u(x)$ is orthogonal to all~$\dot\gamma$ at~$T_xM$ for $\gamma\in\Gamma$.

The very last claim follows from the following lemma.
\end{proof}

Let us denote the Minkowski space of dimension $1+n$ by~$\R^{1,n}$ and the subset of null vectors in it by~$L$.

\begin{lemma}
Let $n\geq2$.
If $A\subset L$ is a non-empty open subset and $\ip{v}{a}=0$ for all $a\in A$, then $v=0$.
\end{lemma}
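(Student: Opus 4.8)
The plan is to exploit nondegeneracy of the Minkowski form together with an explicit parametrization of the null cone. Since $\ip{\cdot}{\cdot}$ is nondegenerate, $v=0$ as soon as $v$ is orthogonal to a spanning set; it therefore suffices to show that a nonempty open subset $A\subset L$ linearly spans $\R^{1,n}$, equivalently that $A$ cannot be contained in any hyperplane $v^\perp$ with $v\neq0$. I would establish this directly by computing on a convenient chart of the cone, which carries the hypotheses all the way to $v=0$.

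Write $v=(v_0,v')$ with $v'\in\R^n$, and parametrize the punctured null cone by
\begin{equation}
(s,\omega)\mapsto s(1,\omega),\qquad s\in\R\setminus\{0\},\ \omega\in S^{n-1}\subset\R^n .
\end{equation}
Since a nonzero null vector $a$ has $a_0\neq0$, this is a homeomorphism onto $L\setminus\{0\}$, and a one-line computation gives $\ip{v}{s(1,\omega)}=s\,(v_0-v'\cdot\omega)$, where $v'\cdot\omega$ is the Euclidean inner product. Because $A$ is open and nonempty and the projection $s(1,\omega)\mapsto\omega$ is an open map, the set $\Omega\subset S^{n-1}$ of directions arising from points of $A$ is open and nonempty. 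As $s\neq0$, the hypothesis $\ip{v}{a}=0$ on $A$ becomes the affine condition
\begin{equation}
v'\cdot\omega=v_0\qquad\text{for all }\omega\in\Omega .
\end{equation}

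It then remains to show that an affine function can be constant on a nonempty open subset of $S^{n-1}$ only if it is the zero function, i.e. $v'=0$ and $v_0=0$. If $(v',v_0)\neq(0,0)$, then $\{x\in\R^n:v'\cdot x=v_0\}$ is either empty or an affine hyperplane, whose intersection with $S^{n-1}$ is empty, a single point, or a round sphere $S^{n-2}$; in every case this set has empty interior in $S^{n-1}$ as soon as $n\geq2$, contradicting the openness of $\Omega$. Hence $v'=0$, and then $v_0=v'\cdot\omega=0$ for any $\omega\in\Omega$, so $v=0$. (Alternatively, one can argue that the differences $\omega-\omega_0$ with $\omega,\omega_0\in\Omega$ span the tangent space $\omega_0^\perp$, forcing $v'\parallel\omega_0$, and then use a second, non-collinear direction in $\Omega$ to conclude $v'=0$.)

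The only genuine content is this last sphere step, which is exactly where the assumption $n\geq2$ enters: for $n=1$ the sphere $S^0=\{\pm1\}$ is discrete, every one-point subset is open in it, and the statement indeed fails — in $\R^{1,1}$ any nonzero null vector $v$ is orthogonal to an open arc of a null line, since null lines are self-orthogonal. I therefore expect no serious obstacle beyond carefully recording that a hyperplane section of the sphere is lower-dimensional, hence has empty interior, in dimension $n-1\geq1$.
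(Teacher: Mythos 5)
Your proof is correct, but it takes a different route from the paper's. The paper argues pointwise and intrinsically: for each nonzero $a\in A$, small variations of $a$ within the cone show that $v$ annihilates $T_aL$, and since the restriction of the Minkowski form to $T_aL$ is degenerate exactly along $\R a$, this forces $v\in\R a$; applying this at two points of $A$ on distinct rays (available because $A$ is open and $n\geq2$) puts $v$ on two different null lines, hence $v=0$. You instead globalize immediately: you chart the punctured cone as $(s,\omega)\mapsto s(1,\omega)$, reduce the hypothesis to the affine condition $v'\cdot\omega=v_0$ on a nonempty open subset $\Omega\subset S^{n-1}$, and observe that a hyperplane section of $S^{n-1}$ has empty interior when $n-1\geq1$. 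Your version is more elementary and self-contained — it needs no facts about tangent spaces or normals of the null cone, only that the form is nondegenerate and that hyperplane sections of spheres are thin — and it makes the failure at $n=1$ completely transparent, which the paper does not comment on. The paper's version is shorter on the page and generalizes more readily to settings where an explicit global chart of the cone is unavailable, since it only uses the local structure of $L$ near a smooth point. Your parenthetical alternative (differences $\omega-\omega_0$ spanning $\omega_0^{\perp}$) is essentially the paper's tangent-space argument transplanted to the sphere, so the two proofs are closer than they first appear; still, the main line of your argument is genuinely distinct and equally valid.
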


\begin{proof}
Take any interior point $a\in A\setminus\{0\}$.
Near~$a$ the light cone is a smooth hypersurface, and considering small variations of~$a$ shows that $\ip{v}{\xi}$ for all $\xi\in T_aL$.
The unique direction orthogonal to~$L$ at~$a$ is that of~$a$ itself, and so $v=\lambda a$ for some $\lambda\in\R$.

Now take any other interior point~$a'$ which is not a radial scaling of~$a$.
The same argument shows that~$v$ is a scalar multiple of~$a'$ as well.
Thus~$v$ lies on two different light rays through the origin and so $v=0$.
\end{proof}

\subsection*{Acknowledgements}

J.I.\ was supported by the Academy of Finland (grants 332890 and 351665).
G.U.\ was partially supported by NSF, a Walker Professorship at UW, a Si-Yuan Professorship at IAS, HKUST, and Simons Fellowship. 

\bibliographystyle{abbrv}
\bibliography{bibliography}

\end{document}